\newenvironment{bprooftree}
{\leavevmode\hbox\bgroup}
{\DisplayProof\egroup}
\title{On the Use of Computational Paths in Path Spaces of Homotopy Type Theory}
\author{Arthur F. Ramos\inst{1} \and Ruy J. G. B. de Queiroz\inst{1} \and Anjolina G. de Oliveira\inst{1} \and Tiago Mendon\c{c}a Lucena de Veras\inst{2}}
\institute{Centro de Inform\'atica\\
	Universidade Federal de Pernambuco\\
	\email{afr@cin.ufpe.br}\\
	\email{ruy@cin.ufpe.br}\\
	\email{ago@cin.ufpe.br}\\
	\and
	Departamento de Matem\'atica\\
	Universidade Federal Rural de Pernambuco\\
	\email{tiago.veras@ufrpe.br}
}
\begin{document}

	\maketitle
	
	\begin{abstract}
		The treatment of equality as a type in type theory gives rise to an interesting type-theoretic structure known as `identity type'. The idea is that, given terms $a,b$ of a type $A$, one may form the type $Id_{A}(a,b)$, whose elements are proofs that $a$ and $b$ are equal elements of type $A$. A term of this type, $p : Id_{A}(a,b)$, makes up for the grounds (or proof) that establishes that $a$ is indeed equal to $b$. Based on that, a proof of equality can be seen as a sequence of substitutions and rewrites, also known as a `computational path'. One interesting fact is that it is possible to rewrite computational paths using a set of reduction rules arising from an analysis of redundancies in paths. These rules were mapped by De Oliveira in 1994 in a term rewrite system known as $LND_{EQ}-TRS$. Here we use computational paths and this term rewrite system to work with path spaces. In homotopy type theory, the main technique used to define path spaces is the code-encode-decode approach. Our objective is to propose an alternative approach based on the theory of computational paths. We believe this new approach is simpler and more straightforward than the code-encode-decode one. We then use our approach to obtain two important results of homotopy type theory: the construction of the path space of the naturals and the calculation of the fundamental group of the circle.\\
		\smallskip
		\noindent \textbf{Keywords.} Path-spaces, computational paths, homotopy type theory, fundamental group of the circle, path space of natural numbers,  term rewriting systems.
	\end{abstract}
	
	\section{Introduction}\label{intro}
	
	There seems to be little doubt that the identity type is one of the most intriguing concepts of  Martin-L\"of's Type Theory. This claim is supported by recent groundbreaking discoveries. In 2005, Vladimir Voevodsky \cite{Vlad1} discovered the Univalent Models, resulting in a new area of research known as homotopy type theory \cite{Steve1}. This theory is based on the fact that a term of some identity type, for example $p: Id_{A}(a,b)$, has a clear homotopical interpretation. The interpretation is that the witness $p$ can be seen as a homotopical path between the points $a$ and $b$ within a topological space $A$. This simple interpretation has made clear the connection between type theory and homotopy theory, generating groundbreaking results, as one can see in \cite{hott,Steve1}. Nevertheless, it is important to emphasize that 
    the homotopic paths exist only in the semantic sense. In other words, there is no formal entity in type theory that represents these paths. They are not present in the syntax of type theory.
	
	In this work, we are interested in an entity known as computational path, originally proposed by \cite{Ruy4}. A computational path is an entity that establishes the equality between two terms of the same type. It differs from the homotopical path, since it is not only a semantic interpretation. It is a formal entity of the equality theory. In fact, we proposed in \cite{Ruy1} that it should be considered as the type of the identity type. Moreover, we have further developed this idea in \cite{Art3}, where we proposed a groupoid model and proved that computational paths also refute the uniqueness of identity proofs. Thus, we obtained a result  that is on par with the same one obtained by Hofmann \& Streicher (1995) for the original identity type \cite{hofmann2}.
	
	Our main idea in this work is to develop further our previous results. Specifically, we want to focus on the idea of path spaces. Since paths are not present in the syntax of traditional homotopy type theory, simulating the path of space of a complex entity could be a cumbersome task. The main approach is to use a technique known as code-encode-decode to try to simulate the path space \cite{hott}, making proofs somewhat difficult to read and understand. Since our theory adds the concept of path directly to type theory, working with path spaces is a simpler and more direct task. To support this claim, we will work with two examples. First, we will define the path space of the naturals, also showing that it is equivalent to the one obtained using the code-encode-decode technique. Secondly, we will provide a simple proof of that the fundamental group of the circle is isomorphic to the integers. Since the same proof using code-encode-decode is rather complex \cite{hott}, we hope to assert the advantage in simplicity of our approach. 
	
	\section{Computational Paths} \label{path}
	
	In this section, our objective is to give a brief introduction to the theory of computational paths. One should refer to \cite{Ruy1,Art3} for a detailed development of this theory. 
	
	A computational path is based on the idea that it is possible to formally define when two computational objects $a,b : A$ are equal. These two objects are equal if one can reach $b$ from $a$ applying a sequence of axioms or rules. This sequence of operations forms a path. Since it is between two computational objects, it is said that this path is a computational one. Also, an application of an axiom or a rule transforms (or rewrite) an term into another. For that reason, a computational path is also known as a sequence of rewrites. Nevertheless, before we define formally a computational path, we can take a look at one famous equality theory, the $\lambda\beta\eta-equality$ \cite{lambda}:
	
	\begin{definition}
		The \emph{$\lambda\beta\eta$-equality} is composed by the following axioms:
		
		\begin{enumerate}
			\item[$(\alpha)$] $\lambda x.M = \lambda y.M[y/x]$ \quad if $y \notin FV(M)$;
			\item[$(\beta)$] $(\lambda x.M)N = M[N/x]$;
			\item[$(\rho)$] $M = M$;
			\item[$(\eta)$] $(\lambda x.Mx) = M$ \quad $(x \notin FV(M))$.
		\end{enumerate}
		
		And the following rules of inference:

		\bigskip
		\noindent
		\begin{bprooftree}
			\AxiomC{$M = M'$ }
			\LeftLabel{$(\mu)$ \quad}
			\UnaryInfC{$NM = NM'$}
		\end{bprooftree}
		\begin{bprooftree}
			\AxiomC{$M = N$}
			\AxiomC{$N = P$}
			\LeftLabel{$(\tau)$}
			\BinaryInfC{$M = P$}
		\end{bprooftree}
		
		\bigskip
		\noindent
		\begin{bprooftree}
			\AxiomC{$M = M'$ }
			\LeftLabel{$(\nu)$ \quad}
			\UnaryInfC{$MN = M'N$}
		\end{bprooftree}
		\begin{bprooftree}
			\AxiomC{$M = N$}
			\LeftLabel{$(\sigma)$}
			\UnaryInfC{$N = M$}
		\end{bprooftree}
		
		\bigskip
		\noindent
		\begin{bprooftree}
			\AxiomC{$M = M'$ }
			\LeftLabel{$(\xi)$ \quad}
			\UnaryInfC{$\lambda x.M= \lambda x.M'$}
		\end{bprooftree}
		
		
		
		
	\end{definition}
	
	
	
	
	
	\begin{definition}($\beta$-equality \cite{lambda})
		$P$ is $\beta$-equal or $\beta$-convertible to $Q$  (notation $P=_\beta Q$)
		iff $Q$ is obtained from $P$ by a finite (perhaps empty)  series of $\beta$-contractions
		and reversed $\beta$-contractions  and changes of bound variables.  That is,
		$P=_\beta Q$ iff \textbf{there exist} $P_0, \ldots, P_n$ ($n\geq 0$)  such that
		$P_0\equiv P$,  $P_n\equiv Q$,
		$(\forall i\leq n-1) (P_i\triangleright_{1\beta}P_{i+1}  \mbox{ or }P_{i+1}\triangleright_{1\beta}P_i  \mbox{ or } P_i\equiv_\alpha P_{i+1}).$
	\end{definition}

	The same happens with $\lambda\beta\eta$-equality:\\
	\begin{definition}($\lambda\beta\eta$-equality \cite{lambda})
		The equality-relation determined by the theory $\lambda\beta\eta$ is called
		$=_{\beta\eta}$; that is, we define
		$$M=_{\beta\eta}N\quad\Leftrightarrow\quad\lambda\beta\eta\vdash M=N.$$
	\end{definition}
	
	\begin{example}
		Take the term $M\equiv(\lambda x.(\lambda y.yx)(\lambda w.zw))v$. Then, it is $\beta\eta$-equal to $N\equiv zv$ because of the sequence:\\
		$(\lambda x.(\lambda y.yx)(\lambda w.zw))v, \quad  (\lambda x.(\lambda y.yx)z)v, \quad   (\lambda y.yv)z , \quad zv$\\
		which starts from $M$ and ends with $N$, and each member of the sequence is obtained via 1-step $\beta$-contraction or $\eta$-contraction of a previous term in the sequence. To take this sequence into a {\em path\/}, one has to apply transitivity twice, as we we see below. Taking this sequence into a path leads us to the following:\\
	
		\noindent The first is equal to the second based on the grounds:\\
		$\eta((\lambda x.(\lambda y.yx)(\lambda w.zw))v,(\lambda x.(\lambda y.yx)z)v)$\\
		The second is equal to the third based on the grounds:\\
		$\beta((\lambda x.(\lambda y.yx)z)v,(\lambda y.yv)z)$\\
		Now, the first is equal to the third based on the grounds:\\
		$\tau(\eta((\lambda x.(\lambda y.yx)(\lambda w.zw))v,(\lambda x.(\lambda y.yx)z)v),\beta((\lambda x.(\lambda y.yx)z)v,(\lambda y.yv)z))$\\
		Now, the third is equal to the fourth one based on the grounds:\\
		$\beta((\lambda y.yv)z,zv)$\\
		Thus, the first one is equal to the fourth one based on the grounds:\\
		$\tau(\tau(\eta((\lambda x.(\lambda y.yx)(\lambda w.zw))v,(\lambda x.(\lambda y.yx)z)v),\beta((\lambda x.(\lambda y.yx)z)v,(\lambda y.yv)z)),\beta((\lambda y.yv)z,zv)))$.
	\end{example}
	
	
	The aforementioned theory establishes the equality between two $\lambda$-terms. Since we are working with computational objects as terms of a type, we need to translate the $\lambda\beta\eta$-equality to a suitable equality theory based on Martin L\"of's type theory. For the $\Pi$-type, for example, we obtain:
	
	\begin{definition}
		The equality theory of Martin L\"of's type theory has the following basic proof rules for the $\Pi$-type \cite{Ruy1,Art3}:
		
		\bigskip
		
		\noindent
		\begin{bprooftree}
			\hskip -0.3pt
			\alwaysNoLine
			\AxiomC{$N : A$}
			\AxiomC{$[x : A]$}
			\UnaryInfC{$M : B$}
			\alwaysSingleLine
			\LeftLabel{$(\beta$) \quad}
			\BinaryInfC{$(\lambda x.M)N = M[N/x] : B[N/x]$}
		\end{bprooftree}
		\begin{bprooftree}
			\hskip 17pt
			\alwaysNoLine
			\AxiomC{$[x : A]$}
			\UnaryInfC{$M = M' : B$}
			\alwaysSingleLine
			\LeftLabel{$(\xi)$ \quad}
			\UnaryInfC{$\lambda x.M = \lambda x.M' : \Pi_{(x : A)}B$}
		\end{bprooftree}
		
		\bigskip
		
		\noindent
		\begin{bprooftree}
			\hskip -0.5pt
			\AxiomC{$M : A$}
			\LeftLabel{$(\rho)$ \quad}
			\UnaryInfC{$M = M : A$}
		\end{bprooftree}
		\begin{bprooftree}
			\hskip 100pt
			\AxiomC{$M = M' : A$}
			\AxiomC{$N : \Pi_{(x : A)}B$}
			\LeftLabel{$(\mu)$ \quad}
			\BinaryInfC{$NM = NM' : B[M/x]$}
		\end{bprooftree}
		
		\bigskip
		
		\noindent
		\begin{bprooftree}
			\hskip -0.5pt
			\AxiomC{$M = N : A$}
			\LeftLabel{$(\sigma) \quad$}
			\UnaryInfC{$N = M : A$}
		\end{bprooftree}
		\begin{bprooftree}
			\hskip 105pt
			\AxiomC{$N : A$}
			\AxiomC{$M = M' : \Pi_{(x : A)}B$}
			\LeftLabel{$(\nu)$ \quad}
			\BinaryInfC{$MN = M'N : B[N/x]$}
		\end{bprooftree}
		
		\bigskip
		
		\noindent
		\begin{bprooftree}
			\hskip -0.5pt
			\AxiomC{$M = N : A$}
			\AxiomC{$N = P : A$}
			\LeftLabel{$(\tau)$ \quad}
			\BinaryInfC{$M = P : A$}
		\end{bprooftree}
		
		\bigskip
		
		\noindent
		\begin{bprooftree}
			\hskip -0.5pt
			\AxiomC{$M: \Pi_{(x : A)}B$}
			\LeftLabel{$(\eta)$ \quad}
			\RightLabel {$(x \notin FV(M))$}
			\UnaryInfC{$(\lambda x.Mx) = M: \Pi_{(x : A)}B$}
		\end{bprooftree}
		
		\bigskip
		
	\end{definition}
	
	We are finally able to formally define computational paths:
	
	\begin{definition}
		Let $a$ and $b$ be elements of a type $A$. Then, a \emph{computational path} $s$ from $a$ to $b$ is a composition of rewrites (each rewrite is an application of the inference rules of the equality theory of type theory or is a change of bound variables). We denote that by $a =_{s} b$.
	\end{definition}
	
	As we have seen in \textbf{example 1}, composition of rewrites are applications of the rule $\tau$. Since change of bound variables is possible, each term is considered up to $\alpha$-equivalence.
	
	\section{A Term Rewriting System for Paths}
	
	As we have just shown, a computational path establishes when two terms of the same type are equal. From the theory of computational paths, an interesting case arises. Suppose we have a path $s$ that establishes that $a =_{s} b : A$ and a path $t$ that establishes that $a =_{t} b : A$. Consider that $s$ and $t$ are formed by distinct compositions of rewrites. Is it possible to conclude that there are cases that $s$ and $t$ should be considered equivalent? The answer is \emph{yes}. Consider the following example:
	
	\begin{example}
		\noindent \normalfont Consider the path  $a =_{t} b : A$. By the symmetric property, we obtain $b =_{\sigma(t)} a : A$. What if we apply the property again on the path $\sigma(t)$? We would obtain a path  $a =_{\sigma(\sigma(t))} b : A$. Since we applied symmetry twice in succession, we obtained a path that is equivalent to the initial path $t$. For that reason, we conclude the act of applying symmetry twice in succession is a redundancy. We say that the path $\sigma(\sigma(t))$ can be reduced to the path $t$.
	\end{example}
	
	As one could see in the aforementioned example, different paths should be considered equal if one is just a redundant form of the other. The example that we have just seen is just a straightforward and simple case. Since the equality theory has a total of 7 axioms, the possibility of combinations that could generate redundancies are high. Fortunately, most possible redundancies were thoroughly mapped by \cite{Anjo1}. In that work, a system that establishes redundancies and creates rules that solve them was proposed. This system, known as $LND_{EQ}-TRS$, originally mapped a total of 39 rules. For each rule, there is a proof tree that constructs it. In this work, we have discovered $3$ entirely new rules and added to the system, making a total of $42$ rules. We included all rules in \textbf{appendix B}, highlighting the recently discovered ones. To illustrate those rules, take the case of \textbf{example 2}. We have the following \cite{Ruy1}:
	
	\bigskip
	\begin{prooftree}
		\AxiomC{$x =_{t} y : A$}
		\UnaryInfC{$y =_{\sigma(t)} x : A$}
		\RightLabel{\quad $\rhd_{ss}$ \quad $x =_{t} y : A$}
		\UnaryInfC{$x =_{\sigma(\sigma(t))} y : A$}
	\end{prooftree}
	
	\bigskip
	
	It is important to notice that we assign a label to every rule. In the previous case, we assigned the label $ss$.

	\begin{definition}($rw$-rule \cite{Art3})
		\normalfont An $rw$-rule is any of the rules defined in $LND_{EQ}-TRS$.
	\end{definition}
	
	\begin{definition}($rw$-contraction \cite{Art3})
		Let $s$ and $t$ be computational paths. We say that $s \rhd_{1rw} t$ (read as: $s$ $rw$-contracts to $t$) iff we can obtain $t$ from $s$ by an application of only one $rw$-rule. If $s$ can be reduced to $t$ by finite number of $rw$-contractions, then we say that $s \rhd_{rw} t$ (read as $s$ $rw$-reduces to $t$).
		
	\end{definition}
	
	\begin{definition}($rw$-equality \cite{Art3})
		\normalfont  Let $s$ and $t$ be computational paths. We say that $s =_{rw} t$ (read as: $s$ is $rw$-equal to $t$) iff $t$ can be obtained from $s$ by a finite (perhaps empty) series of $rw$-contractions and reversed $rw$-contractions. In other words, $s =_{rw} t$ iff there exists a sequence $R_{0},....,R_{n}$, with $n \geq 0$, such that
		
		\centering $(\forall i \leq n - 1) (R_{i}\rhd_{1rw} R_{i+1}$ or $R_{i+1} \rhd_{1rw} R_{i})$
		
		\centering  $R_{0} \equiv s$, \quad $R_{n} \equiv t$
	\end{definition}
	
	\begin{proposition}\label{proposition3.7} $rw$-equality  is transitive, symmetric and reflexive.
	\end{proposition}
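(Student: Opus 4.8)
The plan is to argue directly from the definition of $=_{rw}$, which presents $rw$-equality as the existence of a finite (possibly empty) chain of one-step $rw$-contractions and reversed contractions joining the two paths. Each of the three properties then follows by a routine manipulation of such chains, exactly paralleling the treatment of $\beta$-equality recalled earlier in the paper. For reflexivity, given any path $s$ I would exhibit the empty chain: take $n = 0$ and $R_{0} \equiv s$. The quantified linking condition $(\forall i \leq n-1)(\ldots)$ is then vacuously satisfied, since $n-1 = -1 < 0$, while the endpoint requirements $R_{0} \equiv s$ and $R_{n} \equiv s$ hold trivially. Hence $s =_{rw} s$, which is precisely why the definition is phrased to allow a ``perhaps empty'' series.

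For symmetry, suppose $s =_{rw} t$ is witnessed by a chain $R_{0} \equiv s, \ldots, R_{n} \equiv t$. I would reverse it, setting $R'_{j} := R_{n-j}$, so that $R'_{0} \equiv t$ and $R'_{n} \equiv s$. For each consecutive pair I use the fact that the linking condition is a disjunction: if the original step gave $R_{i} \rhd_{1rw} R_{i+1}$, then in the reversed chain this same step reads off as $R'_{(n-i)-1} \equiv R_{i+1}$ and $R'_{(n-i)} \equiv R_{i}$, i.e. as a reversed contraction $R'_{(n-i)} \rhd_{1rw} R'_{(n-i)-1}$; symmetrically when the original step was already a reversed contraction, it becomes a contraction. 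Thus the reversed chain witnesses $t =_{rw} s$.

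For transitivity, suppose $s =_{rw} t$ is witnessed by $R_{0} \equiv s, \ldots, R_{n} \equiv t$ and $t =_{rw} u$ by $S_{0} \equiv t, \ldots, S_{m} \equiv u$. Since $R_{n} \equiv t \equiv S_{0}$, I would concatenate the two chains into a single sequence $R_{0}, \ldots, R_{n}, S_{1}, \ldots, S_{m}$ of length $n+m$, with first term $s$ and last term $u$. Every consecutive pair of this combined sequence either lies wholly inside the first chain or wholly inside the second, and at the join the two copies of $t$ coincide so no step is required; in every case the linking condition holds. Hence $s =_{rw} u$.

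The argument is entirely routine, and no property specific to the individual $rw$-rules of $LND_{EQ}\text{-}TRS$ is invoked: the statement is really just the assertion that the reflexive–symmetric–transitive closure of any one-step relation on paths is an equivalence relation. The only step demanding any care is the index bookkeeping in the symmetry argument, and even there the disjunctive form of the linking condition makes the direction-swap automatic, so I expect no genuine obstacle.
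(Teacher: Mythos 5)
Your proof is correct and takes essentially the same route as the paper, which simply remarks that $rw$-equality is the reflexive, symmetric and transitive closure of the one-step relation; you have merely spelled out the standard chain manipulations (empty chain, reversal, concatenation) that the paper leaves implicit. No gap here.
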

	
	\begin{proof}
		Comes directly from the fact that $rw$-equality is the transitive, reflexive and symmetric closure of $rw$.
	\end{proof}
	
	The above proposition is rather important, since sometimes we want to work with paths up to $rw$-equality. For example, we can take a path $s$ and use it as a representative of an equivalence class, denoting this by $[s]_{rw}$.
	
	We'd like to mention that  $LND_{EQ}-TRS$ is terminating and confluent. The proof of this affirmation can be found in \cite{Anjo1,Ruy2,Ruy3,RuyAnjolinaLivro}.
	
	One should refer to \cite{Ruy1,RuyAnjolinaLivro} for a more complete and detailed explanation of the rules of $LND_{EQ}-TRS$.
	
	\section{Natural Numbers}
	Before proceeding to the actual content of this section, one should be familiar to the concept of transport. To see how it is defined in homotopy type theory, one could refer to \cite{hott}. Our formulation of transport is based on computational paths and thus, it is different from the traditional one. Since it is used in the proof of \textbf{theorem 3}, we included the details in \textbf{appendix C}.
	
	Our objective in this section is to use computational paths to define the path-space of the naturals. The Natural Numbers is a type defined inductively by an element $0 : \mathbb N$ and a function $succ : \mathbb N \rightarrow \mathbb N$. Instead of having to use the rigid syntax of type theory to simulate the path space, in our approach the path space of the naturals is characterized inductively. The basis is the reflexive path $0 =_{\rho} 0$.  All subsequent paths are constructed by applications of the inference rules of $\lambda\beta\eta$-equality. In fact, we can now show that this characterization is similar to the one constructed in \cite{hott}. To do this, we use code-encode-decode. For $\mathbb N$, we define $code$ recursively \cite{hott}:
	
	\begin{center}
		$code(0,0) \equiv 1$\\
		$code(succ(m),0) \equiv 0$\\
		$code(0,succ(m)) \equiv 0$\\
		$code(succ(m),succ(n)) \equiv code (m,n)$
	\end{center}
	
	We also define a dependent function $r : \Pi_{(n : \mathbb N)}code(m,n)$, with:
	
	\begin{center}
		$r(0) \equiv *$\\
		$r(succ(n)) \equiv r(n)$
	\end{center}
	
	Before we show results directly related to $\mathbb N$, we need the following result:
	
	\begin{theorem}
		For any type A and a path $x =_{\rho} x : A$, if a path $s$ is obtained by a series (perhaps empty) of applications of axioms and rules of inference of $\lambda\beta\eta$-equality theory for type theory to the path $\rho$, then there is a path $t'$ such that $s =_{t'} \rho$.
	\end{theorem}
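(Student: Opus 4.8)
The plan is to read the conclusion ``$s =_{t'} \rho$'' as the assertion that $s$ and $\rho$ are $rw$-equal, the witnessing path $t'$ being precisely a sequence of $rw$-contractions and reversed $rw$-contractions connecting them; by Proposition~\ref{proposition3.7} this relation is reflexive, symmetric and transitive, which is exactly what lets us paste such witnesses together. Observe first that, since $t'$ must relate $s$ to the reflexive path $\rho$, the path $s$ is necessarily a loop whose endpoints both equal $x$ (up to the change of base point induced by the congruence rules). I would therefore prove the slightly stronger statement that every path built from $\rho$ by the rules of the equality theory $rw$-reduces to a reflexive path, and proceed by induction on the number $n$ of rule applications in the series producing $s$.

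For the base case $n = 0$ we have $s \equiv \rho$ and may take $t' \equiv \rho$, the reflexive path on $\rho$ supplied by the $(\rho)$ axiom at the level of paths. For the inductive step I would case-split on the last rule used to form $s$, applying the induction hypothesis to the immediate sub-path(s). If $s \equiv \sigma(s_1)$ with $s_1 =_{rw} \rho$, then by congruence $\sigma(s_1) =_{rw} \sigma(\rho)$ and the $LND_{EQ}-TRS$ rule $\sigma(\rho) \rhd_{rw} \rho$ closes the case. If $s \equiv \tau(s_1, s_2)$ with $s_1, s_2 =_{rw} \rho$, then $\tau(s_1, s_2) =_{rw} \tau(\rho, \rho) \rhd_{rw} \rho$ by the rule contracting $\tau$ against $\rho$, while the variant in which $s_2$ is $\sigma(s_1)$ is absorbed by the cancellation rule $\tau(t, \sigma(t)) \rhd_{rw} \rho$. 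The congruence rules $\mu, \nu, \xi$ are treated identically, each using its reflexivity-contraction ($\mu(\rho) \rhd_{rw} \rho$, and so on) from Appendix~B. In every case transitivity of $=_{rw}$ (Proposition~\ref{proposition3.7}) concatenates the congruence step with the reflexivity-contraction to give $s =_{rw} \rho$, and the concatenated rewrite sequence is the required $t'$.

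The delicate part is the treatment of the axioms $\beta$ and $\eta$, which, unlike the rules above, introduce genuine edges between distinct terms and so can momentarily carry the path away from the base point $x$. Here I would argue that, because $s$ is constrained to be a loop at $x$, any such axiom edge must eventually be matched by its reverse along the construction, so that every $\beta$- or $\eta$-excursion sits inside a subpath of the form $\tau(w, \sigma(w))$ (or reduces to one after the induction hypothesis is applied), which the cancellation rules send to $\rho$. Making this matching precise, i.e.\ showing that the loop condition forces the $\beta/\eta$ steps to cancel rather than accumulate, is where the real work lies, and it is the step I expect to be the main obstacle. Once it is in place, termination and confluence of $LND_{EQ}-TRS$ guarantee that the normal form reached is $\rho$ itself, completing the induction.
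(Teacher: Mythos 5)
Your first two paragraphs are, case for case, the paper's own proof (given in Appendix~D): a structural induction on the construction of $s$ from $\rho$, with base case $\rho =_{\rho} \rho$, the $\sigma$ and $\tau$ cases closed by the contractions $\sigma(\rho) \rhd_{sr} \rho$ and $\tau(\rho,\rho) \rhd_{trr} \rho$, and the $\mu$, $\nu$, $\xi$ cases closed by the newly added rules $40$--$42$ ($\mu(\rho) =_{mxp} \rho$, etc.), everything pasted together by transitivity of $rw$-equality exactly as you describe. The only divergence is your third paragraph. The paper reads the hypothesis ``obtained by applications of axioms and rules of inference \emph{to the path} $\rho$'' as saying that $s$ is generated from $\rho$ by the path-forming operations $\sigma$, $\tau$, $\mu$, $\nu$, $\xi$ alone; no fresh $\beta$- or $\eta$-edges between distinct terms ever enter the induction, the case analysis is exhaustive, and the appendix proof simply contains no $\beta/\eta$ case. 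Under that reading your ``main obstacle'' dissolves and your argument is complete as written. Under your more literal reading, in which fresh axiom instances such as $\tau(\beta,\sigma(\beta))$ may be spliced into the loop, you have correctly spotted that something extra is needed, but your proposed resolution --- that the loop condition forces every $\beta/\eta$ excursion into a cancellable $\tau(w,\sigma(w))$ pattern --- is asserted rather than proved, and the paper supplies nothing to fill that hole either. In short: same approach, complete for the theorem as the paper intends it; the residual worry concerns the phrasing of the statement rather than the inductive argument itself.
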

	
	\begin{proof}
		In \textbf{appendix D}.
	\end{proof}
	
	\begin{theorem}
		For any $m,n  : \mathbb N$, if there is a path $m =_{t} n : \mathbb N$, then $t \rhd \rho$.
	\end{theorem}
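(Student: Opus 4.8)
The plan is to reduce this statement to the previous theorem, which already supplies exactly the collapse to $\rho$ that we need, and to organize that reduction by a nested induction on the naturals $m$ and $n$ following the recursive clauses of $code$. The guiding observation is that a canonical natural number $succ(\cdots succ(0)\cdots)$ contains neither a $\lambda$-abstraction nor an application, and hence no $\beta$- or $\eta$-redex; therefore the only atomic equality proof the theory can assign to such a term is an instance of the reflexivity axiom $\rho$. Every computational path $m =_t n$ between naturals is thus assembled out of copies of $\rho$ by the inference rules $\sigma$, $\tau$, $\mu$, $\nu$ (and $\xi$), which is precisely the hypothesis required to invoke the previous theorem.

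First I would handle the base case $m \equiv 0$, $n \equiv 0$. Any path $0 =_t 0$ is, by the observation above, obtained from $\rho$ by a (possibly empty) series of rule applications, so the previous theorem yields a higher path $t =_{t'} \rho$, that is, $t =_{rw} \rho$. Since $LND_{EQ}-TRS$ is terminating and confluent (Section 3) and $\rho$ is already a normal form, $rw$-equality to $\rho$ forces $t \rhd \rho$. I would then dispose of the mixed cases $m \equiv 0,\, n \equiv succ(n')$ and $m \equiv succ(m'),\, n \equiv 0$ by noting that no equality rule relates the distinct constructors $0$ and $succ(n')$ on $\mathbb N$, so no path $t$ exists and the claim holds vacuously; this is the computational-path counterpart of the clause $code(0, succ(n)) \equiv 0$.

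The heart of the argument is the successor case $m \equiv succ(m')$, $n \equiv succ(n')$, with induction hypothesis that every path $m' =_s n'$ satisfies $s \rhd \rho$. Given $succ(m') =_t succ(n')$, I would decode it to a path between $m'$ and $n'$: applying congruence with the predecessor function together with the defining rewrite $pred(succ(k)) \rhd k$ — realized through the transport operation of Appendix C — produces a path $m' =_{t'} n'$. The induction hypothesis then gives $t' \rhd \rho$, so that $m'$ and $n'$ are the same natural; re-encoding by congruence of $succ$ turns this reduced path into the reflexivity path on $succ(m')$, and since a congruence rule applied to $\rho$ itself reduces to $\rho$, the confluence of $LND_{EQ}-TRS$ collapses the original $t$ to $\rho$.

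I expect the main obstacle to be this successor step: extracting a well-behaved path $m' =_{t'} n'$ from $succ(m') =_t succ(n')$ and checking that the decode-then-encode round trip commutes with the rewrite rules, so that the outer path genuinely $rw$-reduces to $\rho$. This is the computational-paths analogue of establishing injectivity of $succ$ while tracking the higher path structure, and it is exactly where the interaction between congruence, the $\beta$-style reduction for $pred$, the transport of Appendix C, and the confluence and termination of $LND_{EQ}-TRS$ must be controlled with care.
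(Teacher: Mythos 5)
Your core move --- reducing the statement to Theorem 1 --- is exactly what the paper does, and your observation that Theorem 1 only delivers an $rw$-\emph{equality} $t =_{t'} \rho$ while the statement asserts a \emph{reduction} $t \rhd \rho$, so that termination and confluence of $LND_{EQ}-TRS$ together with the fact that $\rho$ is a normal form are needed to bridge the two, is a genuine point of care that the paper glosses over. But the paper's entire proof is, in effect, your base case: Section 4 \emph{characterizes} the path space of $\mathbb N$ inductively, with basis $0 =_{\rho} 0$ and all further paths obtained by applications of the inference rules of the equality theory. Granting that characterization, \emph{every} path $m =_{t} n$ --- whatever the shapes of $m$ and $n$ --- already satisfies the hypothesis of Theorem 1, and the conclusion follows uniformly, with no case analysis on $m$ and $n$ at all.

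The nested induction following the clauses of $code$, and in particular the successor case, is therefore a detour, and it is also where your proof has a real hole. To pass from $succ(m') =_{t} succ(n')$ to a path $m' =_{t'} n'$ you invoke congruence with $pred$, a $\beta$-style reduction $pred(succ(k)) \rhd k$, and the transport of Appendix C, and you then need the decode-then-encode round trip to commute with the rewrite rules so that the \emph{original} $t$ (not merely the re-encoded path) reduces to $\rho$. You explicitly flag this as the main obstacle and do not carry it out; as written it is an unproven injectivity-of-$succ$ argument at the level of higher paths, and nothing among the $42$ rules of Appendix B obviously supplies the required commutation. The repair is not to do this work but to delete it: apply Theorem 1 directly to the inductive generation of the path space, exactly as you already do for $m \equiv n \equiv 0$.
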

	
	\begin{proof}
		Since all paths are constructed from the reflexive path $0 =_{\rho} 0$, this is a direct application of \textbf{theorem 1}.
	\end{proof}
	
	\begin{theorem}
		For any $m,n : \mathbb{N}$, we have $(m = n) \simeq code(m,n)$
	\end{theorem}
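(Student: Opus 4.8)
The plan is to establish the equivalence by exhibiting the standard \emph{encode} and \emph{decode} maps and verifying that they are mutually inverse up to $rw$-equality, with \textbf{theorem 2} doing the essential work. First I would define $encode : (m =_{t} n) \to code(m,n)$ by transporting the canonical element along the path, $encode(t) :\equiv transport^{code(m,-)}(t, r(m))$, using the notion of transport from \textbf{appendix C}; here one first checks by induction on $m$ that $code(m,m) \equiv 1$, so that $r(m) : code(m,m)$ is well defined. Next I would define $decode : code(m,n) \to (m = n)$ by double induction following the four defining clauses of $code$: set $decode(0,0)(*) :\equiv \rho$; the clauses $decode(succ(m),0)$ and $decode(0,succ(n))$ are maps out of the empty type and are vacuously defined; and $decode(succ(m),succ(n)) : code(m,n) \to (succ(m) = succ(n))$ is the inductive map $code(m,n) \to (m = n)$ followed by the congruence inference rule that lifts a path $m =_{s} n$ to a path $succ(m) = succ(n)$.

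To show that $encode \circ decode$ is the identity, I would argue clause by clause: whenever $code(m,n)$ is the unit type its only element is $*$, and $encode(decode(*)) = transport^{code(m,-)}(\rho, r(m)) = r(m) = *$, while the clauses landing in the empty type are vacuous. The reverse composite $decode \circ encode$ is where the computational-path machinery pays off: given any $t : m =_{t} n$, \textbf{theorem 2} gives $t \rhd \rho$, which can hold only when $m \equiv n$; in that case $code(m,n) \equiv 1$ is contractible, so $encode(t) = *$ and $decode(encode(t))$ computes to the reflexive path $\rho$, whence $decode(encode(t)) =_{rw} t$ again by \textbf{theorem 2}. Both round-trips then hold and the two types are equivalent.

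The step I expect to be the main obstacle is converting the conclusion of \textbf{theorem 2}, namely $t \rhd \rho$, into the precise structural fact that the equivalence needs: that the path space $(m = n)$ is contractible (inhabited by the single path $\rho$ up to $rw$-equality) when $m \equiv n$ and empty otherwise. The delicate point is the empty case, where one must argue that $t \rhd \rho$ is impossible unless the two endpoints are definitionally equal, so that $(succ(m) = 0)$ and $(0 = succ(n))$ really are uninhabited and match $code = 0$. This relies on $\rho$ being a reflexive path together with the termination and confluence of $LND_{EQ}-TRS$, which guarantee that every path reduces to a unique normal form. Once this contractibility-or-emptiness dichotomy is secured, the base cases and the successor induction mirror the recursion defining $code$ and are routine.
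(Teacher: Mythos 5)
Your proposal is correct and follows essentially the same route as the paper: $encode$ by transporting $r(m)$ along the path, $decode$ by recursion on $code$ with the successor congruence, and \textbf{theorem 2} doing the work of collapsing every path $m =_{t} n$ to $\rho$. The one step the paper makes explicit that you leave implicit is that the successor clause of $decode$ returns $\mu_{succ}(\rho)$ rather than $\rho$ on the nose, so the newly added rewrite rule 40 ($\mu_{f}(\rho_{x}) \rhd_{mxp} \rho_{f(x)}$) is needed to conclude $decode(\ldots) =_{rw} \rho$ in the inductive case.
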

	
	\begin{proof}
		We need to define $encode$ and $decode$ and prove that they are pseudo-inverses. We define $encode : \Pi_{(m,n : \mathbb N)}(m = n) \rightarrow code(m,n)$ as:
		
		\begin{center}
			$encode(m,n,p) \equiv transport^{code(m,-)} (p, r(m))$
		\end{center}
		
		We define $decode : \Pi_{(m,n : \mathbb N)}code(m,n) \rightarrow (m = n)$ recursively:
		
		\begin{center}
			$decode(0,0, c) \equiv 0 =_{\rho} 0$\\
			$decode(succ(m), 0, c) \equiv 0$\\
			$decode(0, succ(m), c) \equiv 0$\\
			$decode(succ(m), succ(n), c) \equiv \mu_{succ}(decode(m,n,c))$
		\end{center}
		
		We now prove that if $m =_{p} n$, then $decode(code(m,n)) = \rho$. We prove by induction. The base is trivial, since $decode(0,0,c) \equiv \rho$. Now, consider $decode(succ(m),succ(n),c)$. We have that $decode(succ(m),succ(n),c) \equiv \mu_{succ}(decode(m,n,c))$. By the inductive hypothesis, $decode(m,n,c) \equiv \rho$. Thus, we need to prove that $\mu_{succ} = \rho$. This last step is a straightforward application of \textbf{rule 40} (check \textbf{appendix B}).  Therefore,  $\mu_{succ} =_{mxp} \rho$. With this information, we can start the proof of the equivalence.
		
		For any $m =_{p} n$, we have:
		
		\begin{center}
			$encode(m,n,p) \equiv transport^{code(m,-)}(p,r(m))$
		\end{center}
		
		Thus (check \textbf{appendix C}):
		
		\bigskip
		
		\begin{prooftree}
			\AxiomC{$m =_{p} n$}
			\AxiomC{$r(m) : code(m,m)$}
			\RightLabel{$=_{\mu(p)} \quad (r(n) : code(m,n))$}
			\BinaryInfC{$p(m,n) \circ r(m) : code(m,n)$}
		\end{prooftree}
		
		\bigskip
		
		Now, we know that $decode(r(n) : code(m,n)) = \rho$ and, by \textbf{theorem 2}, $p = \rho$.
		
		The proof starting from a $c : code(m,n)$ is equal to the one presented in \cite{hott}. We prove by induction. If $m$ and $n$ are $0$, we have the trivial path $0 =_{\rho} 0$, thus $decode(0,0,c) = \rho_{0}$, whereas $encode(0,0,\rho_{0}) \equiv r(0) \equiv *$. Now, we recall that the induction for the unit type is given by $* \rhd_{\eta} x : 1$ \cite{hott}. Thus, we conclude that every $x : 1$ is equal to $*$, since we have $ x =_{\sigma(\eta)} * : 1$. In the case of $decode(succ(m),0,c)$ or $decode(0,succ(n),c)$, we have $c : 0$. The only case left is for $decode(succ(m), succ(n),c)$. Similar to \cite{hott}, we prove by induction:
		
		\bigskip
		
		$encode(succ(m),succ(n), decode(succ(m),succ(n),c))$
		
		\quad \quad \quad \quad \quad \quad \quad \quad \quad \quad $= encode(succ(m),succ(n), \mu_{succ}(decode(m,n,c))$
		
		\quad \quad \quad \quad \quad \quad \quad \quad \quad \quad $= transport^{code(succ(m),-)}(\mu_{succ}(decode(m,n,c)),r(succ(m))$
		
		\quad \quad \quad \quad \quad \quad \quad \quad \quad \quad $= transport^{code(succ(m),succ(-)}(decode(m,n,c),r(succ(m)))$
		
		\quad \quad \quad \quad \quad \quad \quad \quad \quad \quad $= transport^{code(m,-)}(decode(m,n,c),r(m))$
		
		\quad \quad \quad \quad \quad \quad \quad \quad \quad \quad $= encode(m,n,decode(m,n,c))$
		
		\quad \quad \quad \quad \quad \quad \quad \quad \quad \quad $= c$
	\end{proof}
	
	Therefore, we conclude that our simple inductive definition is equivalent to all the machinery of the code-encode-decode one.
	
	\section{Fundamental Group of a Circle}
	
	The objective of this section is to show that it is possible to use computational paths to obtain one of the main results of homotopy theory, the fact that the fundamental group of a circle is isomorphic to the integers group. We avoid again the use of the heavy and rather complicated machinery of the code-encode-decode approach. First, we define the circle as follows:
	
	\begin{definition}[The circle $S^1$]
		The circle is the type generated by:
		
		\begin{itemize}
			\item A point - $base : S^1$	
			\item A computational path - $base =_{loop} base : S^1$.
		\end{itemize}
	\end{definition}
	
	The first thing one should notice is that this definition doest not use only the points of the type $S^1$, but also a computational path $loop$ between those points. That is way it is called a higher inductive type \cite{hott}. Our approach differs from the classic one on the fact that we do not need to simulate the path-space between those points, since computational paths exist in the syntax of the theory. Thus, if one starts with a path  $base =_{loop} base : S^1$, one can naturally obtain additional paths applying the path-axioms $\rho$, $\tau$ and $\sigma$.  Thus, one has a path $\sigma(loop) = loop^{-1}$, $\tau(loop, loop)$, etc. In classic type theory, the existence of those additional paths comes from establishing that the paths should be freely generated by the constructors \cite{hott}. In our approach, we do not have to appeal for this kind of argument, since all paths comes naturally from direct applications of the axioms.
	
	With that in mind, one can define the fundamental group of a circle. In homotopy theory, the fundamental group is the one formed by all equivalence classes up to homotopy of paths (loops) starting from a point $a$ and also ending at $a$. Since the we use computational paths as the syntax counterpart of homotopic paths in type theory, we use it to propose the following definition:
	
	\begin{definition}[$\Pi_{1}(A,a)$ structure]
		$\Pi_{1}(A,a)$ is a structure defined as follows:
		
		\begin{center}
			$\Pi_{1}(A, a) = \{[loop]_{rw} \mid a =_{loop} a: A\}$
		\end{center}
	\end{definition}
	
	We use this structure to define the fundamental group of a circle. We also need to show that it is indeed a group.
	
	\begin{proposition}
		$(\Pi_{1}(S,a), \circ)$ is a group.
	\end{proposition}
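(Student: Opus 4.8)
The plan is to verify the four group axioms for the structure $(\Pi_{1}(S^1,a),\circ)$, where the underlying set consists of $rw$-equivalence classes $[loop]_{rw}$ of computational paths from $a$ to $a$, and the operation $\circ$ is induced by the transitivity axiom $\tau$: that is, $[s]_{rw}\circ[t]_{rw} \equiv [\tau(s,t)]_{rw}$. The first order of business is to check that $\circ$ is \emph{well-defined} on equivalence classes, i.e. that if $s =_{rw} s'$ and $t =_{rw} t'$ then $\tau(s,t) =_{rw} \tau(s',t')$. This should follow from the fact that $rw$-equality is a congruence with respect to the path constructors, so that replacing a subpath by an $rw$-equal one preserves $rw$-equality of the whole; since $\tau$ is itself one of the path-forming operations, compatibility is immediate.

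Next I would establish closure, which is automatic: $\tau(s,t)$ is again a computational path from $a$ to $a$ whenever $s$ and $t$ are, so $[\tau(s,t)]_{rw} \in \Pi_{1}(S^1,a)$. I would then handle associativity, identity, and inverses in turn. For each of these, the key tool is the term rewriting system $LND_{EQ}\text{-}TRS$ from Section~3: the relevant equalities hold not on the nose but \emph{up to $rw$-equality}, and each is witnessed by a specific reduction rule. Concretely, associativity $\tau(\tau(s,t),u) =_{rw} \tau(s,\tau(t,u))$ is precisely the ``associativity'' rewrite rule (the $tt$-type rule in the appendix); the identity element is $[\rho]_{rw}$, with the laws $\tau(\rho,t) =_{rw} t$ and $\tau(t,\rho) =_{rw} t$ given by the rules that contract a transitivity against a reflexive path (the $tr$/$rt$-type rules); and the inverse of $[t]_{rw}$ is $[\sigma(t)]_{rw}$, with $\tau(t,\sigma(t)) =_{rw} \rho$ and $\tau(\sigma(t),t) =_{rw} \rho$ supplied by the corresponding cancellation rules, analogous to the $ss$ rule of Example~2.

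My expectation is that each axiom reduces to citing the appropriate rule from $LND_{EQ}\text{-}TRS$, so the proof is largely a matter of matching the abstract group-theoretic requirement to the concrete rewrite that realizes it, rather than performing any genuinely novel computation. The main obstacle, such as it is, lies in the well-definedness step and in being careful that the identity and inverse laws hold in \emph{both} orders ($\rho$ on the left and right, $\sigma(t)$ cancelling on both sides), since a term rewriting system is naturally oriented and one must confirm that the reversed contraction is also available when passing to the symmetric closure $=_{rw}$; Proposition~\ref{proposition3.7} guarantees exactly this, so invoking it is what makes the two-sided laws legitimate. Once these routine verifications are assembled, the four axioms are satisfied and $(\Pi_{1}(S^1,a),\circ)$ is a group.
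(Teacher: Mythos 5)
Your proposal is correct and follows essentially the same route as the paper: closure is immediate, the identity is $[\rho]_{rw}$ via the $tlr$/$trr$ rules, inverses are given by $\sigma$ via the $tr$/$tsr$ cancellation rules, and associativity is the $tt$ rule, all read up to $rw$-equality. The one point you raise that the paper passes over silently is the well-definedness of $\circ$ on $rw$-equivalence classes (congruence of $=_{rw}$ with respect to $\tau$), which is a legitimate and worthwhile addition rather than a divergence in method.
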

	
	\begin{proof}
		The first thing to define is the group operation $\circ$. Given any $a =_{r} a : S^1$ and $a =_{t} a : S^1$, we define $r \circ s$ as $\tau(s,r)$. Thus, we now need to check the group conditions:
		
		\begin{itemize}
			
			\item Closure: Given $a =_{r} a : S^1$ and $a =_{t} a : S^1$, $r \circ s$ must be a member of the group. Indeed, $r \circ s = \tau(s,r)$ is a computational path $a =_{\tau(s,r)} a : S^1$.
			\bigskip
			\item Inverse: Every member of the group must have an inverse. Indeed, if we have a path $r$, we can apply $\sigma(r)$. We claim that $\sigma(r)$ is the inverse of $r$, since we have:
			
			\begin{center}
				$\sigma(r) \circ r = \tau(r, \sigma(r)) =_{tr} \rho$
				
				$r \circ \sigma(r) = \tau(\sigma(r), r) =_{tsr} \rho$
			\end{center}
			
			Since we are working up to $rw$-equality, the equalities hold strictly.
			
			\item Identity: We use the path $a =_{\rho} a : S^1$ as the identity. Indeed, we have:
			
			\begin{center}
				$r \circ \rho = \tau(\rho,r) =_{tlr} r$
				
				$\rho \circ r = \tau(r,\rho) =_{trr} r$.
			\end{center}
			
			\item Associativity: Given any members of the group $a =_{r} a : S^1$, $a =_{t} a$ and $a =_{s} a$, we want that $r \circ (s \circ t) = (r \circ s) \circ t$:
			
			\begin{center}
				$r \circ (s \circ t) = \tau(\tau(t,s), r) =_{tt} \tau(t,\tau(s,r)) = (r \circ s) \circ t$
			\end{center}
			
		\end{itemize}
		
		All conditions have been satisfied. $(\Pi_{1}(S,a), \circ)$ is a group.
	\end{proof}
	
	Thus, 	$(\Pi_{1}(S,a), \circ)$ is indeed a group. We call this group the fundamental group of $S^1$. Therefore, the objective of this section is to show that $\Pi_{1}(S,a) \simeq \mathbb{Z}$.
	
	Before we start developing this proof, the following lemma will prove to be useful:
	
	\begin{lemma}
		All paths generated by a path $a =_{loop} a$ are $rw$-equal to a path $loop^{n}$, for a $n \in \mathbb Z$.
	\end{lemma}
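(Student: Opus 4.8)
The plan is to argue by induction on the construction of an arbitrary loop $s$ with $base =_{s} base : S^1$. By the definition of the circle and the remark preceding this lemma, every such path is obtained from the generator $loop$ by finitely many applications of the groupoid operations $\rho$ (reflexivity), $\sigma$ (inversion) and $\tau$ (composition). I first fix the meaning of $loop^{n}$: set $loop^{0} \equiv \rho$, let $loop^{n+1} \equiv \tau(loop^{n}, loop)$ for $n \geq 0$, put $loop^{-1} \equiv \sigma(loop)$, and extend negatively by $loop^{-(n+1)} \equiv \tau(loop^{-n}, \sigma(loop))$. With this notation the claim becomes that each generated path is $rw$-equal to some $loop^{n}$; the precise convention chosen here is immaterial, since any two admissible choices agree up to $rw$-equality.

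The base cases are immediate, as $\rho \equiv loop^{0}$ and $loop \equiv loop^{1}$. For the inductive steps I need two auxiliary laws, each proved by a subsidiary induction on the exponent and each using only rules already present in $LND_{EQ}$-TRS. The first is the \emph{inversion law} $\sigma(loop^{n}) =_{rw} loop^{-n}$. For $n = 0$ this is the reduction $\sigma(\rho) \rhd \rho$; for the inductive step one pushes $\sigma$ through the outermost $\tau$ by the rule rewriting $\sigma(\tau(s,t))$ to $\tau(\sigma(t), \sigma(s))$, applies the induction hypothesis, and collapses any resulting doubled inversion with the $ss$-rule $\sigma(\sigma(t)) \rhd t$ displayed in the excerpt. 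Consequently, if the outer induction hypothesis gives $s =_{rw} loop^{n}$, then $\sigma(s) =_{rw} \sigma(loop^{n}) =_{rw} loop^{-n}$, which is again of the required shape.

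The real work lies in the \emph{composition law} $\tau(loop^{m}, loop^{n}) =_{rw} loop^{m+n}$, which discharges the $\tau$-step: from $s =_{rw} loop^{m}$ and $t =_{rw} loop^{n}$ we obtain $\tau(s,t) =_{rw} \tau(loop^{m}, loop^{n}) =_{rw} loop^{m+n}$. The main obstacle is the cancellation occurring when $m$ and $n$ have opposite signs, since then the target exponent cannot simply be read off the syntax. I would establish the law by induction on $|m| + |n|$: using the associativity rule $\tau(\tau(s,t), u) =_{tt} \tau(s, \tau(t,u))$ one brings an adjacent pair $\tau(loop, \sigma(loop))$ or $\tau(\sigma(loop), loop)$ into position, deletes it by the inverse-cancellation rules $\tau(r, \sigma(r)) =_{tr} \rho$ and $\tau(\sigma(r), r) =_{tsr} \rho$, and then removes the exposed $\rho$ with the unit rules $\tau(\rho, r) =_{tlr} r$ and $\tau(r, \rho) =_{trr} r$ from the group proposition above. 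Each cancellation strictly lowers $|m| + |n|$, so the recursion terminates, leaving a path of the form $loop^{m+n}$.

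Assembling these, the three cases $\rho$, $\sigma$ and $\tau$ of the outer induction are all closed, so every path generated by $loop$ is $rw$-equal to some $loop^{n}$ with $n \in \mathbb{Z}$. I should stress that the lemma asserts only the \emph{existence} of such an $n$: uniqueness, i.e.\ that distinct integers give $rw$-inequivalent loops, is not claimed here and is in fact part of the isomorphism with $\mathbb{Z}$ still to be proved. That the chains of rewrites above can be carried out coherently is underwritten by the confluence and termination of $LND_{EQ}$-TRS cited earlier.
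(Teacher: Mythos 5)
Your proof is correct, but it is organized differently from the paper's, and in fact it is the more thorough of the two. The paper's induction is on a restricted generation process: a path is assumed to be built up by repeatedly composing an already-reduced $loop^{n}$ with a single $loop$ or $\sigma(loop)$ on the right, so the only rewrites needed are the definitional steps $loop^{n}\circ loop = loop^{n+1}$ together with one associativity step ($tt$), one cancellation ($tr$ or $tsr$), and one unit law ($tlr$/$trr$) for the mixed-sign cases. Your induction instead follows the full term structure of a generated path, closing under $\sigma$ applied to \emph{arbitrary} composites and under $\tau$ of two \emph{arbitrary} already-reduced paths; this forces you to prove the two auxiliary laws $\sigma(loop^{n}) =_{rw} loop^{-n}$ and $\tau(loop^{m},loop^{n}) =_{rw} loop^{m+n}$, the latter by a separate induction on $|m|+|n|$. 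What your version buys is coverage of cases the paper's proof silently omits -- e.g.\ $\sigma(\tau(loop,loop))$, or the composition of two multi-step loops -- which are legitimately ``paths generated by $loop$'' under the free-generation reading the paper itself invokes; what the paper's version buys is brevity, at the cost of implicitly assuming every generated path already has the left-associated one-generator-at-a-time shape. One small wrinkle in your inversion law: after pushing $\sigma$ through $\tau$ via the rule $\sigma(\tau(s,t)) \rhd \tau(\sigma(t),\sigma(s))$ the factors come out in reversed order, so matching the result against your chosen normal form for $loop^{-(n+1)}$ is not purely the $ss$-rule plus the induction hypothesis -- it needs either your composition law or a short commutation argument; since you prove the composition law anyway, this is absorbed, but it is worth making the dependence explicit (prove the composition law first, then derive the inversion law from it with $m=-1$).
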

	
	We have said that from a $loop$, one freely generate different paths applying the composition $\tau$ and the symmetry. Thus, one can, for example, obtain something such as $loop \circ loop \circ loop^{-1} \circ loop...$. Our objective with this lemma is to show that, in fact, this path can be reduced to a path of the form $loop^{n}$, for $n \in \mathbb Z$.
	
	\begin{proof}
		The idea is to proceed by induction. We start from a base $\rho$. For the base case, it is trivially true, since we define it to be equal to $loop^{0}$. From $\rho$, one can construct more complex paths by composing with $loop$ or $\sigma(loop)$ on each step.
		We have the following induction steps:
		
		\begin{itemize}
			\item A path of the form $\rho$ concatenated with $loop$: We have $\rho \circ loop = \tau(loop,\rho) =_{trr} loop = loop^{1}$;
			\bigskip
			\item A path of the form $\rho$ concatenated with $\sigma(loop)$: We have $\rho \circ \sigma(loop) = \tau(\sigma(loop),\rho) =_{trr} = \sigma(loop) = loop^{-1}$
			\bigskip
			\item A path of the form $loop^{n}$ concatenated with $loop$: We have $loop^{n} \circ loop = loop^{n+1}$.
			\bigskip
			\item A path of the form $loop^{n}$ concatenated with $\sigma(loop)$: We have $loop^{n} \circ \sigma(loop)$ $= (loop^{n-1} \circ loop) \circ \sigma(loop) =_{tt} loop^{n-1} \circ (loop \circ \sigma(loop)) =$ $loop^{n-1} \circ (\tau(\sigma(loop), loop)) =_{tsr} = loop^{n-1} \circ \rho = \tau(\rho, loop^{n-1}) =_{tlr} loop^{n-1}$
			\bigskip
			\item A path of the form $loop^{-n}$ concatenated with $loop$: We have $loop^{-n}$ = $loop^{-(n - 1)} \circ loop^{-1} = loop^{-(n - 1)} \circ \sigma(loop)$. Thus, we have $(loop^{-(n - 1)} \circ \sigma(loop)) \circ loop$ $=_{tt}$ $loop^{-(n - 1)} \circ (\sigma(loop) \circ loop)$ $=$ $loop^{-(n-1)} \circ \tau(loop,\sigma(loop)) =_{tr}$ $=$ $loop^{-(n-1)} \circ \rho = \tau(\rho,loop^{-(n-1)}) =_{tlr} loop^{-(n-1)}$.
			\bigskip
			\item a path of the form $loop^{-n}$ concatenated with $\sigma(loop)$: We have $loop^{-n} \circ loop^{-1} = loop^{-(n + 1)}$ 
		\end{itemize}
		
		Thus, every path is of the form $loop^{n}$, with $n \in \mathbb Z$.
	\end{proof}

	This lemma shows that every path of the fundamental group can be represented by a path of the form $loop^{n}$, with $n \in \mathbb Z$.
	
	\begin{theorem}
		$\Pi_{1}(S,a) \simeq \mathbb{Z}$
	\end{theorem}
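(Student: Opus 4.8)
The plan is to exhibit a group isomorphism $\varphi : \Pi_{1}(S^1, base) \to \mathbb{Z}$ and verify it preserves the group structure and is a bijection. By the preceding \textbf{Lemma}, every element $[loop]_{rw}$ of $\Pi_{1}(S^1, base)$ has a representative of the form $loop^{n}$ for some $n \in \mathbb{Z}$, where $loop^{0} := \rho$, $loop^{n}$ denotes the $n$-fold $\tau$-composition of $loop$ with itself, and negative powers use $\sigma(loop)$. The natural candidate is therefore $\varphi([loop^{n}]_{rw}) = n$, with inverse $\psi(n) = [loop^{n}]_{rw}$. First I would check that $\varphi$ is well-defined: since the Lemma produces an $n$ for each path, I must argue that the $n$ is \emph{unique}, i.e. that $loop^{m} =_{rw} loop^{n}$ forces $m = n$. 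This is the crux of injectivity and I expect it to be the main obstacle (see below).

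Next I would verify that $\varphi$ is a group homomorphism. Using the group operation $r \circ s = \tau(s, r)$ defined in the \textbf{Proposition}, concatenation of representatives corresponds to addition of exponents: one shows $loop^{m} \circ loop^{n} =_{rw} loop^{m+n}$ by the same inductive, $rw$-rule-driven computation already carried out in the proof of the Lemma (repeatedly applying the associativity rule $tt$, the inverse-cancellation rules $tr$ and $tsr$, and the identity rules $tlr$, $trr$). Granting the exponent-addition identity, $\varphi([loop^{m}]_{rw} \circ [loop^{n}]_{rw}) = \varphi([loop^{m+n}]_{rw}) = m + n = \varphi([loop^{m}]_{rw}) + \varphi([loop^{n}]_{rw})$, so $\varphi$ respects the operations, and it sends the identity $[\rho]_{rw} = [loop^{0}]_{rw}$ to $0$. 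Surjectivity is immediate, since for every $n \in \mathbb{Z}$ the path $loop^{n}$ witnesses $\varphi([loop^{n}]_{rw}) = n$.

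The hard part will be injectivity, i.e. showing that distinct integers give $rw$-inequivalent loops, equivalently that no nontrivial power $loop^{n}$ ($n \neq 0$) is $rw$-equal to $\rho$. The reduction rules of $LND_{EQ}\text{-}TRS$ only cancel \emph{adjacent} occurrences of $loop$ and $\sigma(loop)$ (via $tr$, $tsr$), so a freely reduced word in $loop$ and $loop^{-1}$ of net exponent $n \neq 0$ cannot collapse to $\rho$; here I would invoke the stated confluence and termination of $LND_{EQ}\text{-}TRS$ to guarantee a unique normal form, and argue that the normal form of $loop^{n}$ genuinely records the integer $n$. This is exactly the content that in the classical development requires the winding-number / universal-cover argument, and it is where the computational-path machinery is doing the real work; the remaining algebra is routine. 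I would therefore present the homomorphism and bijection checks briefly and concentrate the argument on the uniqueness of the exponent $n$, completing the proof that $\Pi_{1}(S^1, base) \simeq \mathbb{Z}$.
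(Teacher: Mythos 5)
Your proposal follows essentially the same route as the paper: both use the preceding Lemma to reduce every class to a representative $loop^{n}$ and then set up the obvious correspondence $loop^{n} \leftrightarrow n$ (the paper calls the two maps $toInteger$ and $toPath$ and defines them by the same recursion on the sign of $n$ that you describe). The difference is one of rigor rather than of strategy. The paper verifies neither the homomorphism property nor the bijectivity in any detail, disposing of the latter with the single sentence that the two maps being inverses ``is a straightforward check.'' You correctly isolate the point where this check is \emph{not} straightforward: well-definedness of the integer assigned to a class, equivalently injectivity, i.e.\ that $loop^{m} =_{rw} loop^{n}$ forces $m = n$, or that no $loop^{n}$ with $n \neq 0$ is $rw$-equal to $\rho$. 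This is precisely the content that the classical winding-number argument carries, and your plan to derive it from termination and confluence of $LND_{EQ}$-$TRS$ (unique normal forms, with the normal form of $loop^{n}$ recording $n$) is the right idea, though you would still need to exhibit those normal forms and check that no rule of the system collapses a reduced word of nonzero net exponent. So: same decomposition, same key lemma, but your version makes explicit a genuine obligation that the paper's proof leaves undischarged; completing that one step would make the argument strictly stronger than what is printed here.
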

	
	To prove this theorem, one could use the approach proposed in \cite{hott}, defining an encode and decode functions. Nevertheless, since our computational paths are part of the syntax, one does not need to rely on this kind of approach to simulate a path-space, we can work directly with the concept of path.
	
	\begin{proof}
		The proof is done by establishing a function from $\Pi_{1}(S,a)$ to $\mathbb{Z}$ and then an inverse from $\mathbb{Z}$ to $\Pi_{1}(S,a)$. Since we have access to the previous lemma, this task is not too difficult. The main idea is that the $n$ on $loop^{n}$ means the amount of times one goes around the circle, while the sign gives the direction (clockwise or anti-clockwise). In other words, it is the $winding$ number. Since we have shown that every path of the fundamental group is of the form $loop^{n}$, with $n \in \mathbb Z$, then we just need to translate $loop^{n}$ to an integer $n$ and an integer $n$ to a path $loop^{n}$. We define two functions, $toInteger: \Pi_{1}(S,a) \rightarrow \mathbb Z$ and $toPath: \mathbb Z \rightarrow \Pi_{1}(S,a)$:
		
		\begin{itemize}
			\item $toInteger$: To define this function, we use the help of two functions defined in $\mathbb Z$: the successor function $succ$ and the predecessor function $pred$. We define $toInteger$ as follows. Of course, we use directly the fact that every path of $\Pi_{1}(S,a)$ is of the form $loop^{n}$ with $n \in \mathbb Z$:
			
			\begin{equation*}
			toInteger: \begin{cases}
			toInteger(loop^n \equiv \rho) = 0 \quad \quad \quad \quad \quad \quad \quad \quad \quad \quad \enskip n = 0          \\
			toInteger(loop^{n}) = succ(toInteger(loop^{n-1})) \quad \quad n > 0  \\
			toInteger(loop^{n}) = pred(toInteger(loop^{n+1})) \quad \quad n < 0 \\
			\end{cases}
			\end{equation*}
			
			\item $toPath$: We just need to transform an integer $n$ into a path $loop^{n}$:
			
			\begin{equation*}
			toPath: \begin{cases}
			toPath(n) = \rho \quad \quad \quad \quad \quad \quad \quad \quad \quad\quad \enskip n = 0 \\
			toPath(n) = toPath(n - 1) \circ loop \quad \quad n > 0 \\
			toPath(n) = toPath(n + 1) \circ \sigma(loop) \quad n < 0 \\
			\end{cases}
			\end{equation*}
		\end{itemize}
		
		That they are inverses is a straightforward check. Therefore, we have $\Pi_{1}(S,a) \simeq \mathbb{Z}$.
	\end{proof}
	
	\section{Conclusion}
	
	Based on the theory of computational paths and the term rewriting system, we have proposed the main topic of this work: it is possible to use this approach to define path spaces directly, avoiding the use of complex techniques such as code-encode-decode. We have seen that the reason for that is the fact that the concept of path is only present semantically in traditional homotopy type theory, whereas in our approach it is added directly to the syntax. To illustrate our point, we have focused on two examples: the path space of the naturals and the proof that the fundamental group of the circle is isomorphic to the integers. In the first example, we have shown a straightforward inductive definition of the path space, showing that it is equivalent to the code-encode-decode one. In the second example, we have shown an easy proof for the fundamental group of the circle, using only basic rewriting rules in the process.       
	
	\bibliographystyle{plain}
	\bibliography{ref1}	
	
	\newpage
	\appendix
	
	\section{Subterm Substitution}
	
	In Equational Logic, the sub-term substitution is given by the following inference rule \cite{Ruy2}:
	\begin{center}
		\begin{bprooftree}
			\AxiomC{$s = t$ }
			\UnaryInfC{$s\theta = t\theta$}
		\end{bprooftree}
	\end{center}
	
	One problem is that such rule does not respect the sub-formula property. To deal with that, \cite{chenadec} proposes two inference rules:
	
	\begin{center}
		\begin{bprooftree}
			\AxiomC{$M = N$}
			\AxiomC{$C[N] = O$}
			\RightLabel{$IL$ \quad}
			\BinaryInfC{$C[M] = O$}
		\end{bprooftree}
		\begin{bprooftree}
			\AxiomC{$M = C[N]$}
			\AxiomC{$N = O$}
			\RightLabel{$IR$ \quad}
			\BinaryInfC{$M = C[O]$}
		\end{bprooftree}
	\end{center}
	
	\noindent where M, N and O are terms.
	
	As proposed in \cite{Ruy1}, we can define similar rules using computational paths, as follows:
	
	\begin{center}
		\begin{bprooftree}
			\AxiomC{$x =_r {\cal C}[y]: A$}
			\AxiomC{$y =_s u : A'$}
			\BinaryInfC{$x =_{{\tt sub}_{\tt L}(r,s)} {\cal C}[u]: A$}
		\end{bprooftree}
		\begin{bprooftree}
			\AxiomC{$x =_r w : A'$}
			\AxiomC{${\cal C}[w]=_s u : A$}
			\BinaryInfC{${\cal C}[x]=_{{\tt sub}_{\tt R}(r,s)} u : A$}
		\end{bprooftree}
	\end{center}
	
	\noindent where $C$ is the context in which the sub-term detached by '[ ]' appears and $A'$ could be a sub-domain of $A$, equal to $A$ or disjoint to $A$.
	
	In the rule above, ${\cal C}[u]$ should be understood as the result of replacing every occurrence of $y$ by $u$ in $C$.
	
	\section{List of Rewrite Rules}
	
	We present all rewrite rules of $LND_{EQ}-TRS$. They are as follows (All but the last three have been taken from \cite{Ruy1}):
	\\
	
	\noindent 1. $\sigma(\rho) \triangleright_{sr} \rho$ \\ 
	2. $\sigma(\sigma(r)) \triangleright_{ss} r$\\ 
	3. $\tau({\cal C}[r] , {\cal C}[\sigma(r)]) \triangleright_{tr}  {\cal C }[\rho]$\\ 
	4. $\tau({\cal C}[\sigma(r)], {\cal C}[r]) \triangleright_{tsr} {\cal C}[\rho]$\\ 
	5. $\tau({\cal C}[r], {\cal C}[\rho]) \triangleright_{trr} {\cal C}[r]$\\ 
	6. $\tau({\cal C}[\rho], {\cal C}[r]) \triangleright_{tlr} {\cal C}[r]$ \\ 
	7. ${\tt sub_L}({\cal C}[r], {\cal C}[\rho]) \triangleright_{slr} {\cal C}[r]$\\ 
	8. ${\tt sub_R}({\cal C}[\rho], {\cal C}[r]) \triangleright_{srr} {\cal C}[r]$ \\
	9. ${\tt sub_L} ({\tt sub_L} (s, {\cal C}[r]), {\cal C}[\sigma(r)]) \triangleright_{sls} s$\\
	10. ${\tt sub_L} ( {\tt sub_L} (s , {\cal C}[\sigma(r)]) , {\cal C}[r]) \triangleright_{slss} s$\\ 
	11. ${\tt sub_R} ({\cal C}[s], {\tt sub_R} ({\cal C}[\sigma(s)],r)) \triangleright_{srs} r$\\ 
	12. ${\tt sub_R} ({\cal C}[\sigma(s)], {\tt sub_R} ({\cal C}[s] ,  r )) \triangleright_{srrr} r$\\ 
	13. 
	$\mu_1 ( \xi_1 ( r))\triangleright_{mx2l1} r$\\
	14. $\mu_1 ( \xi_\land ( r,s))\triangleright_{mx2l2} r$\\
	15.
	$\mu_2 ( \xi_\land ( r,s))\triangleright_{mx2r1} s$\\
	16.
	$\mu_2 ( \xi_2 ( s))\triangleright_{mx2r2} s$\\
	17. 
	$\mu ( \xi_1 (r) , s , u) \triangleright_{mx3l} s$\\ 
	18. 
	$\mu (\xi_2 (r) , s , u) \triangleright_{mx3r} u$\\ 
	19.
	$\nu (\xi (r)) \triangleright_{mxl} r$\\ 
	20.
	$\mu (\xi_2 (r) , s) \triangleright_{mxr} s$\\ 
	21.
	$\xi ( \mu_1 (r),\mu_2(r) ) \triangleright_{mx} r$ \\ 
	22.
	$\mu ( t, \xi_1 (r), \xi_2 (s)) \triangleright_{mxx} t$ \\ 
	23. 
	$\xi ( \nu (r) ) \triangleright_{xmr} r$ \\ 
	24. 
	$\mu (s,\xi_2 (r)) \triangleright_{mx1r} s$\\ 
	25. $\sigma(\tau(r,s)) \triangleright_{stss} \tau(\sigma(s),  \sigma(r))$\\ 
	26. $\sigma({\tt sub_L}(r,s)) \triangleright_{ssbl} {\tt sub_R}(\sigma(s), \sigma(r))$\\ 
	27. $\sigma ({\tt sub_R} (r,s)) \triangleright_{ssbr} {\tt sub_L} (\sigma
	(s),  \sigma (r))$\\ 
	28. $\sigma(\xi (r)) \triangleright_{sx} \xi ( \sigma(r))$\\ 
	29. $\sigma(\xi (s, r)) \triangleright_{sxss} \xi ( \sigma(s),  \sigma(r))$\\ 
	30. $\sigma(\mu (r)) \triangleright_{sm} \mu ( \sigma(r))$\\ 
	31. $\sigma(\mu (s, r)) \triangleright_{smss} \mu (\sigma(s),  \sigma(r))$\\ 
	32. $\sigma(\mu (r,u,v)) \triangleright_{smsss} \mu ( \sigma(r),\sigma(u),\sigma(v))$\\
	33. $\tau (r, {\tt sub_L} (\rho , s)) \triangleright_{tsbll} {\tt sub_L}  (r,s)$\\ 
	34. $\tau (r, {\tt sub_R} (s, \rho)) \triangleright_{tsbrl}  {\tt 
		sub_L} (r,s)$\\ 
	35. $\tau({\tt sub_L}(r,s),t) \triangleright_{tsblr} \tau (r, {\tt 
		sub_R} (s,t))$\\ 
	36. $\tau ({\tt sub_R} (s,t),u) \triangleright_{tsbrr} {\tt sub_R} (s, \tau  (t,u))$\\ 
	37. $\tau(\tau(t,r),s) \triangleright_{tt} \tau(t,\tau (r,s)) $\\
	38. $\tau ({\cal C}[u], \tau ({\cal C}[\sigma(u)] , v)) \triangleright_{tts} v$\\
	39. $\tau ({\cal C}[\sigma(u)] , \tau ({\cal C}[u] , v)) \triangleright_{tst} u$\\
	40. $\mu_{f}(\rho_{x}) =_{mxp} \rho_{f(x)}$\\
	41. $\nu(\rho_{x}) =_{nxp} \rho_{f(x)}$\\
	42. $\xi(\rho) =_{xxp} \rho$.
	
	\bigskip
	
	Rules $40$, $41$ and $42$ have been recently discovered and appears for the first time in this work. They come form the following derivation trees:
	
	\bigskip
	
	Rule $40$:
	
	\bigskip
	
	\begin{prooftree}
		\AxiomC{$x =_{\rho_{x}} x : A$}
		\AxiomC{$[f : A \rightarrow B]$}
		\RightLabel{$\rhd_{mxp}$ \quad $f(x) =_{\rho_{f(x)}} f(x) : B(x)$}
		\BinaryInfC{$f(x) =_{\mu(\rho_{x})} f(x) : B(x)$}
	\end{prooftree}
	
	\bigskip
	
	Rule $41$:
	
	\bigskip
	
	\begin{prooftree}
		\AxiomC{$f =_{\rho} f : \Pi_{(x:A)}B(x)$}
		\RightLabel{$\rhd_{nxp}$ \quad $f(x) =_{\rho_{f(x)}} f(x)$}
		\UnaryInfC{$f(x) =_{\nu(\rho_{x})} f(x) : B(x)$}		
	\end{prooftree} 
	
	\bigskip

	Rule $42$:
	
	\begin{prooftree}
		\AxiomC{$b(x) =_{\rho} b(x) : B$}
		\AxiomC{$x : A$}
		\RightLabel{$\rhd_{xxp}$ \quad $\lambda x.b(x) =_{\rho} \lambda x.b(x)$}
		\BinaryInfC{$\lambda x.b(x) =_{\xi(\rho)} \lambda x.b(x) : A \rightarrow B$}	
	\end{prooftree}
	
	\bigskip
	
	\newpage
	
	\section{Transport}
	
	As stated in  \cite{Ruy5}, substitution can take place when no quantifier is involved. In this sense, there is a 'quantifier-less' notion of substitution. In type theory, this 'quantifier-less' substitution is given by a operation known as transport \cite{hott}. In our path-based approach, we formulate a new inference rule of 'quantifier-less' substitution \cite{Ruy5}:
	
	\bigskip
	\begin{prooftree}
		
		\AxiomC{$x =_{p} y : A$}
		\AxiomC{$f(x) : P(x)$}
		\BinaryInfC{$p(x,y)\circ f(x) : P(y)$}
		
	\end{prooftree}
	
	\bigskip
	
	We use this transport operation to solve one essential issue of our path-based approach. We know that given a path $x =_{p} y : A$ and function $f: A \rightarrow B$, the application of axiom $\mu$ yields the path $f(x) =_{\mu_{f}(p)} f(y) : B$. The problem arises when we try to apply the same axiom for a dependent function $f : \Pi_{(x : A)} P(x)$. In that case, we want $f(x) = f(y)$, but we cannot guarantee that the type of $f(x) : P(x)$ is the same as $f(y) : P(y)$. The solution is to apply the transport operation and thus, we can guarantee that the types are the same:
	
	\begin{center}
		
		\bigskip
		\begin{prooftree}
			\AxiomC{$x =_{p} y : A$}
			\AxiomC{$f : \Pi_{(x : A)} P(x)$}
			\BinaryInfC{$p(x,y) \circ f(x) =_{\mu_{f}(p)} f(y) : P(y)$}
		\end{prooftree}
	\end{center}
	
	\bigskip
	
	\newpage
	
	\section{Reflexivity}
	
	Here we provide a proof for \textbf{Theorem 1}, i.e., for the following statement: for any type A and a path $x =_{\rho} x : A$, if a path $s$ is obtained by a series (perhaps empty) of applications of axioms and rules of inference of $\lambda\beta\eta$-equality theory for type theory to the path $\rho$, then there is a path $t'$ such that $s =_{t'} \rho$.
	
	\begin{proof}
		The base case is straightforward. We can start only with a path $x =_{\rho} x$. In that case, it is easy, since we have $\rho =_{\rho} \rho$.
		
		Now, we consider the inductive steps. Starting from a path $s$ and applying $\tau$, $\sigma$, we already have rules yield the desired path:
		
		\begin{itemize}
			\item $s = \sigma(s')$, with $s' =_{t'} \rho$.
			
			In this case, we have $s = \sigma(s') = \sigma(\rho) =_{sr} \rho$. 
			
			\bigskip
			
			\item $s = \tau(s',s'')$, with $s' =_{t'} \rho$ and $s'' =_{t''} \rho$.
			
			We have that $s = \tau(s',s'') = \tau(\rho,\rho) =_{trr} \rho$ 
			
		\end{itemize}
		
		The cases for applications of $\mu$, $\nu$ and $\xi$ remain to be proved: 
		
		\begin{itemize}

			\item $s = \mu(s')$, with $s' =_{t'} \rho$.	
			
			We use \textbf{rule 40}. We have $s = \mu(s') = \mu(\rho) =_{mxp} \rho$.
			
			\bigskip
			
			\item $s = \nu(s')$, with $s' =_{t'} \rho$.

			We use \textbf{rule 41}. We have $s = \nu(s') = \nu(\rho) =_{nxp} \rho$.
			
			\bigskip
			
			\item $s = \xi(s')$, with $s' =_{t'} \rho$.
			
			We use \textbf{rule 42}. We have $s = \xi(s') = \xi(\rho) =_{xxp} \rho$.
			
			\bigskip
			
		\end{itemize}
	\end{proof}

\end{document}